\documentclass[a4paper,conference]{IEEEtran}

\IEEEsettopmargin{t}{30mm}
\IEEEquantizetextheight{c}
\IEEEsettextwidth{14mm}{14mm}
\IEEEsetsidemargin{c}{0mm}

\usepackage[utf8]{inputenc}
\usepackage[T1]{fontenc}
\usepackage{url}
\usepackage{ifthen}
\usepackage{cite}
\usepackage[cmex10]{amsmath} 
\usepackage{vkmacros}
\graphicspath{{figures/}}
\newif\ifshowtodo
\showtodotrue

\newif\ifshowdetail
\showdetailfalse

\newif\ifvlong
\vlongfalse

\newcommand{\VersionLength}{long}
\providecommand{\verlong}{\ifthenelse{\equal{\VersionLength}{long}}}
\newcommand{\VersionCols}{single}
\providecommand{\dcol}{\ifthenelse{\equal{\VersionCols}{double}}}

\def\noarrow{\ar@{-}[r]}

\title{Single-shot lossy compression: \\ mutual information bounds}

\author{%
  \IEEEauthorblockN{Victoria Kostina}
  \IEEEauthorblockA{California Institute of Technology\\
              Pasadena, CA, USA\\
              email: \href{mailto:vkostina@caltech.edu}{vkostina@caltech.edu}}
              \thanks{This work was supported by the Carver Mead New Adventure Grant. }
}

\IEEEoverridecommandlockouts

\date{}							
\begin{document}
\maketitle

\begin{abstract}
For several styles of fidelity constraints ---guaranteed distortion, conditional excess distortion, excess distortion--- we show mutual information upper bounds on the minimum expected description length needed to represent a random variable. Coupled with the corresponding converses, these results attest that as long as the information content in the data is not too low, minimizing the mutual information under an appropriate fidelity constraint serves as a reasonable proxy for the minimum description length of the data. We provide alternative characterizations of all three convex proxies,  shedding light on the structure of their solutions. 

\end{abstract}

\begin{IEEEkeywords}
lossy compression, single-shot codes, minimum description length, mutual information bounds
\end{IEEEkeywords}

\section{Introduction}

A variable-length \emph{quantizer} for the random variable $X \in \set X$ is a (usually non-injective, sometimes randomized) mapping $\mathsf f \colon \mathcal X \mapsto \set Y$, where $\set Y$ is called the reproduction alphabet. The fidelity of representation is quantified by a distortion measure, $\mathsf d \colon \set X \times \set Y \mapsto \mathbb R_+$, which, like the distribution of $X$, is given a priori. The quantizer is designed so as to satisfy the fidelity constraint (formulated as the maximum tolerated distortion in some probabilistic sense over the distribution of $X$). The efficiency of the quantizer can be quantified by the entropy of the discrete random variable at its output: the smaller the better as the entropy of a random variable is directly related to its minimum description length (see Appendix~\ref{apx:entropyvsmdl} for a precise statement of this classical wisdom). 

Minimizing the entropy over all quantizers satisfying the fidelity constraint means finding such a partition of $\set X$ that to each subset of the partition one can assign a representative from $\set Y$  such that the fidelity constraint is satisfied and the entropy is minimized over all such feasible partitions. It is a combinatorial optimization problem. In this paper we identify a convex proxy for the minimum entropy problem. The proxy optimization is a minimum mutual information problem. We show that the minimum quantizer entropy is sandwiched in terms of the proxy mutual information problem. For the proxy mutual information minimization, we provide a necessary and sufficient condition for the optimizer, thereby shedding light on the structure of its solution. 

We carry out the analysis for three styles of fidelity constraints. The first is guaranteed distortion, where a hard distortion threshold $d$ is imposed on the representation of $P_X$-a.e. realization $x$. The second is conditional excess distortion, where, for $P_X$-a.e. realization $x$, the quantizer is permitted to exceed distortion threshold $d$ with probability at most $\epsilon$, where the probability is taken over the quantizer’s randomness. The third is excess distortion, where the quantizer is permitted to exceed distortion threshold $d$ with probability at most $\epsilon$, but the probability is now taken jointly over both $X$ and the quantizer’s randomness; this last formulation is the most permissive out of the three, since violations are controlled only on average across source realizations.  

In terms of mutual information bounds on quantizer entropy, in \cite[Th. 2]{posner1971epsilon} such a bound is shown under guaranteed distortion; the bound includes an unspecified universal constant and requires $\mathsf d$ be a metric. Our result sharpens that bound, as our bound does not include unspecified constants and  does not impose requirements on $\mathsf d$. It is shown in \cite[Th. 2]{li2018strong} that if the fidelity constraint is expected distortion and encoder and decoder have access to common randomness, then the minimal quantizer entropy is bounded above by  $R_X(d) + \log_2 ( R_X(d) + 1) + 5 $ bits, where $R_X(d)$ is the minimal mutual information under expected distortion $d$. The form of this result is similar to ours, but we consider more stringent fidelity criteria and no common randomness.

Our characterizations of the proxy optimizations can be viewed as counterparts of Csisz\'ar's characterization of the minimal mutual information under expected distortion constraint \cite{csiszar1974extremum} to the three more stringent fidelity criteria mentioned above. 

The rest of the paper is organized as follows. \secref{sec:guaranteed} considers guaranteed distortion, \secref{sec:conditional} conditional excess distortion, and \secref{sec:excess} excess distortion.

Sets are denoted by calligraphic letters: $\set X$, $\set Y$; constants by lowercase letters: $x$, $y$; random variables by uppercase letters: $X$, $Y$; functions by sans font: $\mathsf f$, $\mathsf d$; the distribution of $X$ is $P_X$. Unless noted otherwise, $\log$ and $\exp$ are arbitrary common base. 


\section{Guaranteed distortion}
\label{sec:guaranteed}
The guaranteed distortion entropy  of the random variable $X$ is defined as \cite{posner1967epsilonentropy}: 
\begin{equation}
H_X(d, 0) \triangleq \inf_{\substack{ \mathsf f \colon \set X \mapsto \set Y \colon \\ \Prob{\mathsf d(X, \mathsf f (X) ) \leq d} = 1}} H(\mathsf f (X)).  \label{eq:Hd}
\end{equation}
Here $0$ in the left-hand side stands for the requirement that the distortion threshold is violated with probability~0. The quantity in \eqref{eq:Hd} is known as epsilon-entropy \cite{posner1967epsilonentropy}; it represents the minimum achievable entropy at the output of a quantizer that guarantees distortion $d$ for $P_X$-a.e. realization $x$. 

The minimal mutual information that we will show serves as convex proxy for \eqref{eq:Hd} is
\begin{align}
R_X(d, 0) &\triangleq \inf_{
 \substack{
 P_{Y|X} \colon \set X \mapsto \set Y\\
 \Prob{ \mathsf d(X, Y) \leq d} = 1
 }} I(X; Y) 
 \label{eq:objzero}
\end{align}
In \eqref{eq:objzero}, the deterministic quantizer mappings $\mathsf f$ are relaxed to transition probability kernels $P_{Y|X}$; and the entropy at the quantizer's output is replaced with the mutual information between input and output. Together, these relaxations convexify \eqref{eq:Hd}. Unsurprisingly, \eqref{eq:objzero} is a lower bound to \eqref{eq:Hd}; more surprisingly, \eqref{eq:Hd} is also upper bounded in terms of \eqref{eq:objzero}. 
The relevance of the minimization \eqref{eq:objzero} to approximating \eqref{eq:Hd} was noted already in \cite[Th. 2]{posner1971epsilon}, where it was shown that whenever $\mathsf d$ is a metric,
\begin{align}
 \!\! H_X(d, 0) \leq&~ R_X(d, 0) + \log_2 \mathbb ( R_X(d, 0) + 1) + C,  \label{Alossy0}
\end{align}
where $C$ is a universal constant. \thmref{thm:Hd}, stated next, sharpens Posner's result by refining the constant $C$ to $\log_2 e$; and we do not require $\mathsf d$ to be a metric.

\begin{thm}[guaranteed distortion: quantizer entropy]
\begin{align}
 {R}_{X} (d, 0) &\leq H_X(d, 0)  \label{eq:Hdlb} \\
 &\leq {R}_{X} (d, 0) + \log_2 \left({R}_{X} (d, 0) + 1 \right) 
\notag\\
&~+ \log_2 e   \quad \text{bits} \label{eq:Hdub}
\end{align}
\label{thm:Hd}
\end{thm}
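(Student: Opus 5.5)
The lower bound \eqref{eq:Hdlb} is immediate. Any deterministic $\mathsf f$ that is feasible in \eqref{eq:Hd} induces the kernel $P_{Y|X} = \delta_{\mathsf f(x)}$, which is feasible in \eqref{eq:objzero}. For this kernel $I(X; \mathsf f(X)) = H(\mathsf f(X))$ when $\mathsf f(X)$ is discrete, since $H(\mathsf f(X)|X)=0$. Taking infima gives $R_X(d,0)\le H_X(d,0)$.

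For the upper bound \eqref{eq:Hdub}, I will use a greedy, likelihood-ordered covering built from the optimal output distribution. Fix any feasible $P_{Y|X}$ with $I(X;Y)\le R_X(d,0)+\delta$, and let $P_Y$ be its output marginal. For each $y$, let $B(y)=\{x\colon \mathsf d(x,y)\le d\}$. The construction is as follows.
\begin{itemize}
\item Build the quantizer greedily. Choose $y_1$ maximizing $P_X(B(y))$, which is near-achieved if the supremum is not attained. Then choose $y_k$ maximizing $P_X(B(y)\setminus\bigcup_{j<k}B(y_j))$.
\item Set $\mathsf f(x)=y_k$ for the first $k$ with $x\in B(y_k)$. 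This $\mathsf f$ satisfies the distortion constraint almost surely, provided the greedy procedure covers $P_X$-almost all of $\set X$. Coverage follows from the step below, because $P_Y(\{y\colon x\in B(y)\})=1$ for a.e.\ $x$ forces positive mass to remain coverable.
\end{itemize}

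The key estimate controls $\mathsf f$ against the fixed kernel. Write $p_k=P_X(\mathsf f(X)=y_k)$, where the $p_k$ are nonincreasing by the greedy choice. The claim is that for a.e.\ $x$ with $\mathsf f(x)=y_k$,
\begin{equation}
k \le \frac{1}{P_Y(\{y\colon x\in B(y)\})}\cdot(\text{something controlled by } \imath_{X;Y}). \notag
\end{equation}
The cleaner route is the Posner/Li--El Gamal-type argument, which I will run as follows.
\begin{itemize}
\item Greedy maximality implies $p_k\ge P_X\bigl(B(y)\setminus\bigcup_{j<k}B(y_j)\bigr)$ for every $y$.
\item Average this over $y\sim P_Y$ and use $P_{Y|X}(B^{-1}(x)|x)=1$. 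This gives $p_k \ge \mathbb E\bigl[\,\exp(-\imath(X;Y))\,1\{X \text{ uncovered after } k-1\}\bigr]$, or more precisely a bound via $P_Y(B^{-1}(x))\ge \exp(-D(P_{Y|X=x}\|P_Y))$. The latter follows from the data-processing/log-sum inequality applied to the indicator of $B^{-1}(x)$.
\item Conclude that $\mathbb E[\log \mathsf f\text{-index}] \le \mathbb E[D(P_{Y|X}(\cdot|X)\|P_Y)]+\log e = I(X;Y)+\log e$. This step is the main obstacle. I would first try the known greedy set-cover bound: the index $K$ of $x$ satisfies $K\le 1/\,P_Y(B^{-1}(x))$-type estimates in expectation. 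I would then use $\mathbb E\log K\le \log e + \mathbb E\log\frac{1}{P_Y(B^{-1}(X))}$, which comes from summing $\sum_k p_k\log k$ against the harmonic-type comparison $\sum_{k} \Pr[K\ge k]\cdot\frac1k$ with $\Pr[K\ge k]\le(1-q)^{k-1}$ conditionally on $q=P_Y(B^{-1}(x))$. That comparison yields $\log(1/q)+\log e$.
\end{itemize}

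To finish, encode the index $K$ with a prefix code for positive integers whose entropy satisfies the standard bound for monotone distributions, $H(K)\le \mathbb E\log_2 K+\log_2(\mathbb E\log_2 K+1)+\log_2 e$ up to arrangement. Equivalently, bound $H(K)$ by maximizing entropy subject to a constraint on $\mathbb E\log K$, which yields the $\log_2(\cdot+1)$ term. Combining the three steps and letting $\delta\to0$ gives \eqref{eq:Hdub}. The constant bookkeeping, namely whether $\log_2 e$ appears once or must be absorbed via monotonicity of $p_k$ using $\mathbb E\log K\le H(K)$-type inequalities, is where I expect to need the most care.
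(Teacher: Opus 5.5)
Your lower bound is fine, and a greedy covering is a legitimate alternative to the paper's random i.i.d. codebook. However, the key estimate is not established by what you wrote.

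The tail bound $\Pr[K\ge k]\le(1-q)^{k-1}$, given $q=P_Y(\mathcal B_d(x))$, describes the geometric waiting time of a codebook drawn i.i.d. from $P_Y$; this is exactly what the paper uses. For a fixed greedy codebook, $K(x)$ is a deterministic number, and greedy maximality gives no per-$x$ tail bound of this kind. So $\mathbb E\log K\le I(X;Y)+\log e$ is unproved. So is coverage, which you defer to that step. (The fact relevant to coverage is $P_{Y|X=x}(\mathcal B_d(x))=1$, not $P_Y(\mathcal B_d(x))=1$.)

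Moreover, even granting your estimate, the extra $\log e$ would carry through your last step to $R+\log_2(R+\log_2 e+1)+2\log_2 e$. That is weaker than \eqref{eq:Hdub}, so the approach as written does not give the stated constant.

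The missing idea is to sum over $k$ the averaged greedy inequality you already derived. Let $q(x)=P_Y(\mathcal B_d(x))$ and let $K(x)$ be the index of the first $y_k$ within distortion $d$ of $x$, with $K(x)=\infty$ if there is none. Averaging over $Y\sim P_Y$ gives $p_k\ge\mathbb E[q(X)1\{K(X)\ge k\}]$. For this you only need $y_k$ to beat the $P_Y$-average of $P_X(B(y)\setminus\bigcup_{j<k}B(y_j))$, and such a $y_k$ always exists. So no exact maximization and no monotonicity of $p_k$ is needed.

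Summing over $k$ gives
\[
\mathbb E[q(X)K(X)]\le\sum_k p_k\le 1 .
\]
This yields two things:
\begin{itemize}
\item Coverage: $q(X)>0$ a.s., because $\log\frac{1}{q(x)}\le D(P_{Y|X=x}\|P_Y)<\infty$ a.s., so $K<\infty$ a.s.
\item The key estimate: by Jensen, $\mathbb E\log K\le\log\mathbb E[q(X)K(X)]+\mathbb E\log\frac{1}{q(X)}\le\mathbb E\log\frac{1}{q(X)}\le I(X;Y)$, with no $\log e$. This is the deterministic counterpart of the paper's identity $\mathbb E[W|X=x]=1/P_Y(\mathcal B_d(x))$.
\end{itemize}
Your final step ($L=\lfloor\log_2 K\rfloor$, $H(K)\le\mathbb E L+H(L)$, geometric maximum entropy) then gives \eqref{eq:Hdub} exactly. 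It does so without the paper's derandomization step $H(W|Y^\infty)\le H(W)$, which is what the greedy route buys you.

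Alternatively, switch to the paper's i.i.d. codebook, where your geometric tail bound is valid and Jensen applied to $\mathbb E[W|X=x]$ gives the same estimate.
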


\begin{proof}
The lower bound is noted in \cite[Th. 2]{posner1971epsilon}. To see it, let $Y = \mathsf f(X)$. Since $I(X; Y) \leq H(Y)$, the inequality \eqref{eq:Hdlb} follows by minimizing the left-hand side over $P_{Y|X}$ satisfying the constraint. 

For the upper bound, consider the function \cite{posner1971epsilon}
\begin{equation}
  {R}_{X}^+(d, 0) \triangleq \inf_{P_Y}\E{  -\log P_{Y}(\set B_d(X)) },  \label{RR(+)}
\end{equation}
where 
\begin{equation}
 \set B_d(x) \triangleq \left\{ y \in \set Y \colon \sd(x, y) \leq d \right\}
\end{equation}
is the distortion $d$-ball around $x$, and the infimum is over all distributions on $\set Y$. The ``$+$'' in the left side of \eqref{RR(+)} symbolizes that ${R}_{X}^+(d, 0)$ is represented as an expectation of a nonnegative random variable. 

To show the upper bound, we first establish that
\begin{align}
  {R}_{X}^+(d, 0) &\leq {R}_{X}(d, 0). \label{eq:Rdlb}
\end{align}
Indeed, for any $P_{XY}$ with $\Prob{\mathsf d(X, Y) \leq d} = 1$, by the data processing inequality of relative entropy, with the data processor that outputs $P_{Y|X = x}$ if $Y \in \set B_d(x)$ and $P_Y$ otherwise (occurs with probability 0 under $P_{Y|X = x}$), we have 
\begin{align}
D\left( P_{Y|X = x} \| P_{Y}\right) &\geq d(1 \| P_{Y}(\set B_d(x))) \label{eq:dpcondas}\\
&= \log \frac 1 {P_{Y}(\set B_d(x))}.
\end{align}
Here 
\begin{align}
d(\alpha \| q) &\triangleq \alpha \log \frac {\alpha} {q} + (1 - \alpha) \log \frac {1- \alpha} {1-q} 
\end{align}
is the binary relative entropy function. 

It is shown in \cite[(134)]{kostina2015varrate} (the below corrects the typo in \cite[(134)]{kostina2015varrate}, replacing 1 therein with $\log_2 e$)
\begin{align}
 H_{X}(d, 0)  
\leq&~
{R}_{X}^+(d, 0) + \log_2 \left({R}_{X}^+(d, 0) + 1 \right) 
\notag\\
+&~ \log_2 e  \quad \text{bits} \label{eq:Hdupper} 
\end{align}
Combining \eqref{eq:Hdupper} with \eqref{eq:Rdlb} leads to \eqref{eq:Hdub}. For completeness, we include the proof of \eqref{eq:Hdupper} in Appendix~\ref{apx:H<=R+}.
\end{proof}

The following result provides a characterization of the function $R_X(d, 0)$ and identifies a property of the optimal probability kernel that achieves it.

\begin{thm}[guaranteed distortion: minimal mutual information]
 \begin{equation}
 {R}_{X}(d, 0) = {R}_{X}^+(d, 0). \label{eq:RR+}
\end{equation}
Furthermore, the kernel $P_{Y|X}$ attains  ${R}_{X}(d, 0)$ if and only if for $P_X$-a.e. $x$
\begin{equation}
\log \frac{dP_{Y|X = x}}{dP_{Y}}(y) = 
\log   \frac{1}{P_{Y}(\set B_d(x))}   \label{eq:optkernel}
\end{equation}
where $P_Y$ is the $Y$ marginal of $P_X P_{Y|X}$. 
\label{thm:Rd}
\end{thm}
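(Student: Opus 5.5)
Since \eqref{eq:Rdlb} already gives ${R}_{X}^+(d,0)\le {R}_{X}(d,0)$, the task is to prove the reverse inequality and to characterize the optimizers. For any admissible kernel $P_{Y|X}$ with marginal $P_Y$, the data-processing step \eqref{eq:dpcondas} gives the pointwise bound $D(P_{Y|X=x}\|P_Y)\ge \log\frac{1}{P_Y(\set B_d(x))}$. Averaging over $P_X$ yields
\begin{equation}
I(X;Y)=\E{D(P_{Y|X}(\cdot|X)\|P_Y)}\ge \E{-\log P_Y(\set B_d(X))}\ge {R}_{X}^+(d,0).
\end{equation}
For this bound to hold with equality we need $D(P_{Y|X=x}\|P_Y)=\log\frac1{P_Y(\set B_d(x))}$ for a.e.\ $x$. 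Since $P_{Y|X=x}$ is supported on $\set B_d(x)$, we can write $D(P_{Y|X=x}\|P_Y)=\log\frac{1}{P_Y(\set B_d(x))}+D(P_{Y|X=x}\|P_{Y|Y\in\set B_d(x)})$. Equality therefore holds if and only if $P_{Y|X=x}$ is the restriction of $P_Y$ to the ball, normalized. This is exactly \eqref{eq:optkernel}, with the density understood on $\set B_d(x)$.

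For the reverse inequality, I will fix any $Q_Y$ with $\E{-\log Q_Y(\set B_d(X))}<\infty$, so that $Q_Y(\set B_d(x))>0$ a.s. I then define the kernel $P_{Y|X=x}=Q_Y(\cdot\,|\,\set B_d(x))$, which automatically satisfies the guaranteed-distortion constraint. Its marginal $P_Y$ differs from $Q_Y$, but the golden formula $I(X;Y)=\E{D(P_{Y|X}\|Q_Y)}-D(P_Y\|Q_Y)\le \E{D(P_{Y|X}\|Q_Y)}$ suffices. Here $D(P_{Y|X=x}\|Q_Y)=\log\frac1{Q_Y(\set B_d(x))}$ exactly. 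Hence ${R}_{X}(d,0)\le \E{-\log Q_Y(\set B_d(X))}$, and taking the infimum over $Q_Y$ gives ${R}_{X}(d,0)\le{R}_{X}^+(d,0)$, which proves \eqref{eq:RR+}.

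For the iff statement, the ``only if'' direction follows from the equality analysis in the first paragraph: an optimal kernel attains $I(X;Y)={R}_{X}(d,0)={R}_{X}^+(d,0)$, so both inequalities in the display are tight, and in particular the first one forces \eqref{eq:optkernel}. For the ``if'' direction, suppose \eqref{eq:optkernel} holds with $P_Y$ self-consistent. Then $I(X;Y)=\E{-\log P_Y(\set B_d(X))}$, and it remains to show that $P_Y$ minimizes ${R}_{X}^+$.

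I expect this to be the main obstacle. My plan is to compare $P_Y$ with an arbitrary $Q_Y$ using the identity
\begin{equation}
\E{\log\frac{P_Y(\set B_d(X))}{Q_Y(\set B_d(X))}}=\E{D(P_{Y|X}\|Q_Y)}-\E{D(P_{Y|X}\|P_Y)}=D(P_Y\|Q_Y)\ge0,
\end{equation}
where the first equality uses \eqref{eq:optkernel} together with $Q_Y(\set B_d(x))$-normalization, via the relation $D(P_{Y|X=x}\|Q_Y)\ge\log\frac1{Q_Y(\set B_d(x))}$, which holds with equality for the restricted kernel. More carefully, I will write
\begin{equation}
\log\frac{P_Y(\set B_d(x))}{Q_Y(\set B_d(x))}\ge \E{\log\frac{dP_Y}{dQ_Y}(Y)\,\Big|\,X=x}
\end{equation}
by Jensen applied under $P_{Y|X=x}=P_Y(\cdot|\set B_d(x))$. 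Averaging over $X$ gives $D(P_Y\|Q_Y)\ge0$. Care is needed with absolute continuity and with infinite values; I will handle these by restricting to those $Q_Y$ with a finite objective.
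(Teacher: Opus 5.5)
Your proof of $R_X(d,0)=R_X^+(d,0)$ and of the ``only if'' direction is correct and follows the paper. You use data processing \eqref{eq:dpcondas} for \eqref{eq:Rdlb}, and the kernel $Q_Y(\cdot\,|\,\set B_d(x))$ with the golden formula for \eqref{eq:RleqR+}. Your decomposition $D(P_{Y|X=x}\|P_Y)=\log\frac{1}{P_Y(\set B_d(x))}+D(P_{Y|X=x}\|P_Y(\cdot\,|\,\set B_d(x)))$ is a clean way to read off the equality case.

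The gap is the ``if'' direction. You rightly note that \eqref{eq:optkernel} only gives $I(X;Y)=\E{-\log P_Y(\set B_d(X))}$, so one must still show that this $P_Y$ attains the infimum in \eqref{RR(+)}. The paper's remark that equality in \eqref{eq:dpcondas} holds iff \eqref{eq:optkernel} holds does not address this either. Your argument for it, however, runs the wrong way:
\begin{itemize}
\item The first equality in your identity is only an inequality. The bound $D(P_{Y|X=x}\|Q_Y)\ge\log\frac{1}{Q_Y(\set B_d(x))}$ is tight only if $P_{Y|X=x}=Q_Y(\cdot\,|\,\set B_d(x))$.
\item Rearranging that same bound, using \eqref{eq:optkernel}, gives $\int\log\frac{dP_Y}{dQ_Y}\,dP_{Y|X=x}\ge\log\frac{P_Y(\set B_d(x))}{Q_Y(\set B_d(x))}$. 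This is the opposite of your Jensen step.
\item Averaging over $X$ therefore yields only $\E{\log\frac{P_Y(\set B_d(X))}{Q_Y(\set B_d(X))}}\le D(P_Y\|Q_Y)$, which says nothing about the sign of the left side.
\end{itemize}

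No repair along these lines is possible, because the ``if'' claim is false. A self-consistent $P_Y$ satisfying \eqref{eq:optkernel} is merely a fixed point, not necessarily a minimizer. Take $X$ uniform on $\{1,2\}$ and $\set Y=\{a,b,c\}$, with $\set B_d(1)=\{a,c\}$ and $\set B_d(2)=\{b,c\}$.
\begin{itemize}
\item The kernel $P_{Y|X=1}=\delta_a$, $P_{Y|X=2}=\delta_b$ is feasible and has marginal $P_Y$ uniform on $\{a,b\}$.
\item It satisfies \eqref{eq:optkernel}: its density with respect to $P_Y$ is $2=1/P_Y(\set B_d(x))$ on its support.
\item Yet $I(X;Y)=\log 2$, while $Y\equiv c$ is feasible and gives $R_X(d,0)=0$.
\end{itemize}
Note that $Q_Y=\delta_c$ has finite, indeed zero, objective, so your proposed restriction to finite-objective $Q_Y$ does not exclude it.

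In general, self-consistency says exactly that $g(y)\triangleq\E{\1{\mathsf d(X,y)\le d}/P_Y(\set B_d(X))}=1$ for $P_Y$-a.e.\ $y$. This makes $P_Y$ optimal only against competitors $Q_Y\ll P_Y$. The missing first-order (Kuhn--Tucker) condition is $g(y)\le 1$ for every $y$; in the example, $g(c)=2$. Given that condition, Jensen's inequality over $X$ rather than over $Y$ closes the argument: $\E{\log\frac{Q_Y(\set B_d(X))}{P_Y(\set B_d(X))}}\le\log\int g\,dQ_Y\le 0$ for every $Q_Y$.

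So a correct characterization is \eqref{eq:optkernel} together with $\sup_y g(y)\le 1$, or equivalently that $P_Y$ attains the infimum in \eqref{RR(+)}. The ``if'' half as written is false, and the paper's one-line justification shares the omission.
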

\begin{proof}
It is shown in \cite[Lemma 13]{posner1971epsilon} and in \cite[(132)]{kostina2015varrate} (proof included in Appendix~\ref{apx:RleqR+} for completeness)
 \begin{align}
 {R}_{X} (d, 0)
\leq&~ {R}_{X}^+(d, 0)   \label{eq:RleqR+},
\end{align}
Combining  \eqref{eq:Rdlb} with \eqref{eq:RleqR+} establishes \eqref{eq:RR+}. To show \eqref{eq:optkernel}, we note that equality in \eqref{eq:dpcondas} is achieved if and only if \eqref{eq:optkernel} holds. 
\end{proof}

\thmref{thm:Rd} parallels Csisz\'ar's characterization \cite{csiszar1974extremum} of the minimal information under expected distortion:
\begin{equation}
R_X(d) \triangleq \inf_{
 \substack{
 P_{Y|X} \colon \set X \mapsto \set Y\\
 \E{\mathsf d(X, Y)} \leq d
 }} I(X; Y)
 \label{eq:RXbar}
\end{equation}
Csisz{\'a}r's characterization states that 
\begin{equation}
R_X(d)   =  \inf_{P_Y} \max_{\lambda \geq 0} \E{ \Lambda_Y(X, \lambda) }
 \label{eq:csiszar}
\end{equation}
where 
\begin{equation}
\Lambda_Y(x, \lambda) \triangleq  \log \frac 1 {\E{ \exp (\lambda d - \lambda \sd(x, Y)}}
\end{equation}
is the generalized tilted information \cite[(28)]{kostina2011fixed}. The kernel $P_{Y|X}$ attains the infimum on the left-hand side of \eqref{eq:csiszar} if and only if
\begin{equation}
\log \frac{dP_{Y|X = x}}{dP_{Y}}(y) = \Lambda_Y(x, \lambda^\star) - \lambda^\star \sd(x, y) + \lambda^\star d
\end{equation}
 where $\lambda^\star = - R^\prime_X(d)$ is the negative of the derivative of $R_X(d)$ at $d$.
Csisz{\'a}r's characterization applies for all $d > d_{\min}$, where $d_{\min}$ is the infimum of $d$ values where  $R_X(d)$ is finite. 

Although the objective in \eqref{eq:objzero} can be viewed as a special case of \eqref{eq:RXbar} with distortion measure $\1{\mathsf \sd(x, y) > d}$ and distortion threshold $0$, \thmref{thm:Rd} is not a special case of Csisz{\'a}r's characterization because distortion threshold 0 is exactly the $d_{\min}$. 

The functions inside the expectations on the right-hand sides of \eqref{eq:optkernel} and \eqref{eq:csiszar} satisfy the relation (Markov's inequality)
\begin{equation}
\log   \frac{1}{P_{Y}(\set B_d(x))} \geq  \Lambda_Y(x, \lambda).
\end{equation}

\section{Conditional excess distortion}
\label{sec:conditional}
In this section, we weaken the constraint in \eqref{eq:Hd} by allowing the distortion threshold be violated with some probability, $\epsilon$, for (almost) every source realization $x$. Namely, we define the excess conditional distortion entropy of the source $X$ as: 
\begin{equation}
H_X^c(d, \epsilon) \triangleq \inf_{\substack{ P_{Y|X} \colon \set X \mapsto \set Y \colon \\ \Prob{\mathsf d(X, Y ) > d | X} \leq \epsilon  ~ \text{a.s.} } } H(Y).  \label{eq:Hepsdeltac}
\end{equation}
Here $c$ in the left side of \eqref{eq:Hepsdeltac} stands for conditioning the constraint on $X$. Note that in \eqref{eq:Hepsdeltac}, we allow randomization at the encoder. Without such randomization, the conditional probability in the constraint set of \eqref{eq:Hepsdeltac} would be either 0 or 1 depending on whether the deterministic representative for $X$ is within $d$ from $X$, meaning that for any $\epsilon < 1$, the feasible set of deterministic encoders in \eqref{eq:Hepsdeltac} coincides with that in \eqref{eq:Hd}.

This section establishes the following convex proxy for the minimization \eqref{eq:Hepsdeltac}:
\begin{equation}
R_X^c(d, \epsilon) \triangleq  \inf_{ \substack{ P_{Y | X } \colon \set X \mapsto \set Y \\ \Prob{ \mathsf d( X, Y) > d  | X} \leq \epsilon}} I(X; Y) \label{RR(eps)c}.
\end{equation} 

\begin{thm}[conditional excess distortion: quantizer entropy]
\begin{align}
 R_X^c (d, \epsilon) &\leq H_X^c(d, \epsilon)  \label{eq:Hdepslbc} \\
 &\leq R_X^c(d, \epsilon) + \log_2 \left(R_X^c(d, \epsilon) + 2 \right) 
\notag\\
&~+ 1 + \log_2 e   \quad \text{bits}  \label{eq:Hdepsubc}
\end{align}
\label{thm:Hdeps}
\end{thm}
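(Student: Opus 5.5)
The lower bound \eqref{eq:Hdepslbc} goes exactly as in \thmref{thm:Hd}: for any feasible $P_{Y|X}$ in \eqref{eq:Hepsdeltac} we have $I(X;Y)\le H(Y)$, and the same kernel is feasible in \eqref{RR(eps)c}. Taking infima gives the claim. For the upper bound I plan to reduce to the guaranteed-distortion machinery, namely \eqref{eq:Hdupper}, through an auxiliary function
\begin{equation}
R_X^{c+}(d,\epsilon) \triangleq \inf_{P_Y} (1-\epsilon)\,\E{ \log \frac{1}{P_Y(\set B_d(X))} } = (1-\epsilon) R_X^+(d,0).
\end{equation}
I will prove two inequalities. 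The first compares $R_X^{c+}$ to $R_X^c$. The second bounds $H_X^c$ through a randomized quantizer built from the deterministic quantizer of \eqref{eq:Hdupper}.

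\emph{Step 1 (converse for the proxy).} Take any $P_{Y|X}$ feasible in \eqref{RR(eps)c}, with marginal $P_Y$. Apply the same data processing argument as in \eqref{eq:dpcondas}, with the indicator of $\{Y\in\set B_d(x)\}$. This gives
\begin{equation}
D(P_{Y|X=x}\|P_Y)\ge d\bigl(P_{Y|X=x}(\set B_d(x))\,\|\,P_Y(\set B_d(x))\bigr).
\end{equation}
The first argument is at least $1-\epsilon$. Using monotonicity of $d(\cdot\|q)$ on $[q,1]$, and the bound $d(\alpha\|q)\ge \alpha\log\frac1q - h(\alpha)$, we get
\begin{equation}
D(P_{Y|X=x}\|P_Y)\ge (1-\epsilon)\log\frac{1}{P_Y(\set B_d(x))} - h(\epsilon).
\end{equation}
If instead $P_Y(\set B_d(x)) > 1-\epsilon$, the right-hand side is already below $\log 2$, so the bound holds there as well. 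Averaging over $X$ and taking infima yields
\begin{equation}
(1-\epsilon)R_X^+(d,0)\le R_X^c(d,\epsilon)+h(\epsilon)\le R_X^c(d,\epsilon)+1 \quad\text{bits}.
\end{equation}

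\emph{Step 2 (achievability).} Let $\mathsf f$ be the deterministic guaranteed-distortion quantizer from Appendix~\ref{apx:H<=R+}, which attains \eqref{eq:Hdupper}. The randomized encoder outputs $\mathsf f(x)$ with probability $1-\epsilon$, independently of $x$. With probability $\epsilon$ it outputs a fixed reproduction point $y_0$, which I would choose to be the most likely output of $\mathsf f$. This kernel is feasible in \eqref{eq:Hepsdeltac}. By the grouping property of entropy,
\begin{equation}
H(Y)\le h(\epsilon)+(1-\epsilon)H(\mathsf f(X)).
\end{equation}
Inserting \eqref{eq:Hdupper} gives the bound $(1-\epsilon)R^+ + (1-\epsilon)\log_2(R^++1)+\log_2 e + h(\epsilon)$. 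Concavity of $\log$ gives
\begin{equation}
(1-\epsilon)\log_2(R^++1)\le\log_2\bigl((1-\epsilon)R^++1\bigr)\le \log_2\bigl(R_X^c(d,\epsilon)+2\bigr),
\end{equation}
where the last step uses Step 1. Substituting Step 1 into the linear term as well produces the shape of \eqref{eq:Hdepsubc}.

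\emph{Main obstacle.} The delicate point is the additive constant. The crude accounting above pays $h(\epsilon)$ twice: once in the converse of Step 1 and once for the coin in Step 2. That yields $2+\log_2 e$ instead of $1+\log_2 e$. To remove one copy, I would first exploit the choice of $y_0$ as the most probable output of $\mathsf f$. Merging the $\epsilon$-branch into that atom should let the coin entropy be absorbed, giving $H(Y)\le (1-\epsilon)H(\mathsf f(X))+h(\epsilon)$ with a slack that cancels part of $h(\epsilon)$. Failing that, I would redo the appendix's codebook construction directly for the conditional constraint. The idea is to index codewords by $P_Y$-probability, let the encoder search only with probability $1-\epsilon$, and bound the expected log-index by $(1-\epsilon)\E{\log 1/P_Y(\set B_d(X))}$ plus a single binary-entropy term. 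Everything else is routine once \eqref{eq:Hdupper} and the data processing inequality are in hand.
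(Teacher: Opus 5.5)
Your main line does not establish \eqref{eq:Hdepsubc}, and the obstacle you flag is real, not a rounding artifact. You use \eqref{eq:Hdupper} as a black box and then flip an independent coin $U$, so you pay $h(\epsilon)$ for the coin in Step~2 on top of the $h(\epsilon)$ lost in Step~1. Even with exact bookkeeping your chain only gives $R_X^c+\log_2(R_X^c+1+h(\epsilon))+2h(\epsilon)+(1-\epsilon)\log_2 e$. At $\epsilon=1/2$ this exceeds the claimed bound by $1-\tfrac12\log_2 e\approx 0.28$ bits, for every value of $R_X^c$. Your first remedy cannot close this. If $y_0$ is the most likely output of $\mathsf f$, with mass $p_1$, the saving is exactly $H(U|Y)=(\epsilon+(1-\epsilon)p_1)\,h\bigl(\epsilon/(\epsilon+(1-\epsilon)p_1)\bigr)$. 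This tends to $0$ as $p_1\to 0$, which is precisely the high-entropy regime where the constant matters.

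What works is your second remedy, which is essentially the paper's construction (Appendix~\ref{apx:H<=R+c}). Draw $Y_1,Y_2,\ldots$ i.i.d.\ from $P_Y$. With probability $1-\epsilon$ send the index $W$ of the first $d$-close codeword, and otherwise send $W=1$. The point to make explicit is that the give-up branch has $L=\lfloor\log_2 W\rfloor=0$. Jensen's inequality conditioned on the coin then gives $\mathbb{E}[L]\le(1-\epsilon)\,\mathbb{E}[\log_2 1/P_Y(\mathcal{B}_d(X))]$ with no binary-entropy term at all. The coin's randomness is absorbed into $H(L)\le\log_2(1+\mathbb{E}[L])+\log_2 e$ instead of being paid separately. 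Now take $P_Y$ near-optimal and apply your Step~1 once, which is the only place $h(\epsilon)\le 1$ is paid. This gives $H_X^c\le R_X^c+h(\epsilon)+\log_2(R_X^c+h(\epsilon)+1)+\log_2 e$, which implies \eqref{eq:Hdepsubc}.

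The paper places the same inequality $\alpha\log\frac1q\le d(\alpha\|q)+h(\alpha)$ on the achievability side instead. It searches with probability $\alpha_Y(x)=\max\{1-\epsilon,P_Y(\mathcal{B}_d(x))\}$ and uses the proxy $R_X^{c\,+}$, which is at most $R_X^c$ with no loss. That placement yields the exact characterization in Theorem~\ref{thm:Rdc}. Your placement instead ties $H_X^c(d,\epsilon)$ directly to $(1-\epsilon)R_X(d,0)$.

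One small repair in Step~1: when $P_Y(\mathcal{B}_d(x))>1-\epsilon$, the right side is at most $(1-\epsilon)\log\frac{1}{1-\epsilon}-h(\epsilon)=-\epsilon\log\frac1\epsilon\le 0$. Being ``below $\log 2$'' would not suffice.
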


\begin{proof}
The lower bound follows from $I(X; Y) \leq H(Y)$. 

To show the upper bound, we introduce the function
\begin{equation}
  {R}_{X}^{c\,+}(d,\epsilon) \triangleq \inf_{P_Y}\E{  d \left( \alpha_Y(X) \| P_Y\left(\set B_d(X)\right)\right) },\label{RR(+)c}
\end{equation}
where
\begin{equation}
\alpha_Y(x) \triangleq \max\left\{1 - \epsilon, P_Y(\set B_d(x)) \right\}.
\label{eq:alpha_Y}
\end{equation}
We establish that
\begin{align}
 {R}_{X}^{c\,+}(d,\epsilon)
  &\leq {R}_{X}^c(d, \epsilon). \label{eq:Rdepslbc}
\end{align}
Indeed, fix any $P_{Y| X = x}$ with $P_{Y| X = x}\left( \set B_d(x) \right)  \geq 1 - \epsilon$. With the data processor that outputs $P_{Y|X=x}$ if  $Y \in \set B_d(x)$ and $P_{Y}$ otherwise, data processing of relative entropy yields
\begin{align}
D\left( P_{Y|X = x} \| P_{Y}\right) &\geq d(P_{Y| X = x}\left( \set B_d(x) \right) \| P_Y(\set B_d(x))) \label{eq:dpca}\\
&\geq d( \alpha_Y(x) \| P_Y(\set B_d(x))), \label{eq:dpcb}
\end{align}
where to write \eqref{eq:dpcb} we used the fact that if $\alpha > q$, then $d(\alpha \| q)$ increases as a function of $\alpha$, and if  $\alpha < q$, then it decreases as $\alpha$ increases, vanishing to $0$ at the extreme $\alpha = q$.

Next, we assert the following extension of \eqref{eq:Hdupper}:
\begin{align}
H_X^c(d, \epsilon) 
\leq&~
 {R}_{X}^{c\,+}(d,\epsilon) + \log_2 \left(  {R}_{X}^{c\,+}(d,\epsilon) + 2 \right) 
\notag\\
+&~ 1 + \log_2 e \quad \text{bits}\label{eq:Hdepsupperc} 
\end{align}
Combining \eqref{eq:Hdepsupperc} with \eqref{eq:Rdepslbc} leads to \eqref{eq:Hdepsubc}. The proof of \eqref{eq:Hdepsupperc} is in Appendix~\ref{apx:H<=R+c}.
\end{proof}

\begin{thm}[conditional excess distortion: minimal mutual information]
 \begin{equation}
 R_X^c (d, \epsilon) = {R}_{X}^{c\,+}(d,\epsilon). \label{eq:RR+c}
\end{equation}
Furthermore, the kernel $P_{Y|X}$ attains  $R_X^c (d, \epsilon)$ if and only if for $P_X$-a.e. $x$
\begin{align}
\frac{dP_{Y|X = x}}{dP_{Y}}(y) &= 
\alpha_Y(x)  \frac{ \1{ \mathsf d(x, y) \leq d}  }{P_{Y}(\set B_d(x))}   \label{eq:optkernelc}\\
&~+ (1 - \alpha_Y(x)) \frac{\1{ \mathsf d(x, y) > d}}{P_{ Y}(\set B_d^c(x))}  \notag
\end{align}
where $P_Y$ is the $Y$ marginal of $P_X P_{Y|X}$. 
\label{thm:Rdc}
\end{thm}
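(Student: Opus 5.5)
Since \eqref{eq:Rdepslbc} already gives $R_X^{c\,+}(d,\epsilon) \le R_X^c(d,\epsilon)$, establishing \eqref{eq:RR+c} reduces to the reverse inequality $R_X^c(d,\epsilon) \le R_X^{c\,+}(d,\epsilon)$. This will follow the pattern of the proof of \eqref{eq:RleqR+} in Appendix~\ref{apx:RleqR+}. Fix an arbitrary distribution $Q_Y$ on $\set Y$ with $\E{d(\alpha_Q(X)\|Q(\set B_d(X)))}<\infty$, where $\alpha_Q$ denotes \eqref{eq:alpha_Y} computed with $Q_Y$ in place of $P_Y$. From it, build the kernel suggested by \eqref{eq:optkernelc}:
\begin{equation*}
\frac{dP_{Y|X=x}}{dQ_Y}(y) = \alpha_Q(x)\frac{\1{\sd(x,y)\le d}}{Q_Y(\set B_d(x))} + (1-\alpha_Q(x))\frac{\1{\sd(x,y)> d}}{Q_Y(\set B_d^c(x))}.
\end{equation*}
This is a bona fide kernel. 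It places mass $\alpha_Q(x)\ge 1-\epsilon$ on $\set B_d(x)$, so it satisfies the constraint in \eqref{RR(eps)c}. The degenerate cases $Q_Y(\set B_d(x))\in\{0,1\}$ need separate handling: $Q_Y(\set B_d(x))=0$ gives infinite cost, and $Q_Y(\set B_d(x))=1$ gives $\alpha=1$, where the second term is simply dropped.

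Next, apply the golden formula $I(X;Y) = \E{D(P_{Y|X}\|Q_Y)} - D(P_Y\|Q_Y) \le \E{D(P_{Y|X}\|Q_Y)}$. By construction, $D(P_{Y|X=x}\|Q_Y)$ equals exactly $d(\alpha_Q(x)\|Q_Y(\set B_d(x)))$, since the density is piecewise constant on $\set B_d(x)$ and its complement. Taking the expectation and then the infimum over $Q_Y$ gives $R_X^c(d,\epsilon)\le R_X^{c\,+}(d,\epsilon)$. Measurability of $x\mapsto Q_Y(\set B_d(x))$ is needed here; I will assume it, as in the guaranteed-distortion case.

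For the characterization of optimal kernels, take any feasible $P_{Y|X}$ with marginal $P_Y$. The chain
\begin{equation*}
I(X;Y)=\E{D(P_{Y|X}\|P_Y)}\ \ge\ \E{d(\alpha_Y(X)\|P_Y(\set B_d(X)))}\ \ge\ R_X^{c\,+}(d,\epsilon)=R_X^c(d,\epsilon)
\end{equation*}
uses \eqref{eq:dpca}--\eqref{eq:dpcb} for the first inequality. Hence $P_{Y|X}$ is optimal if and only if both inequalities hold with equality (assuming the value is finite).

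Equality in \eqref{eq:dpca}, the data processing through the indicator of $\set B_d(x)$, holds iff $dP_{Y|X=x}/dP_Y$ is constant on $\set B_d(x)$ and constant on its complement. Equality in \eqref{eq:dpcb} holds iff $P_{Y|X=x}(\set B_d(x))=\alpha_Y(x)$, by the strict monotonicity of $d(\cdot\|q)$ on either side of $q$. Together, these two conditions are exactly \eqref{eq:optkernelc} for $P_X$-a.e.\ $x$.

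The main obstacle is the second inequality: showing that kernels satisfying \eqref{eq:optkernelc} attain the infimum over $Q_Y$, and not merely achieve equality pointwise for their own marginal. For sufficiency, I would argue that if \eqref{eq:optkernelc} holds, then $I(X;Y)=\E{d(\alpha_Y(X)\|P_Y(\set B_d(X)))}$. For any other $Q_Y$, the kernel built from $Q_Y$ above achieves $\E{d(\alpha_Q\|Q(\set B_d))}\ge$ its mutual information, and I would then need a convexity/variational argument showing that the self-consistent $P_Y$ minimizes. I would try a Csisz\'ar-style perturbation: for a convex combination $(1-t)P_Y+tQ_Y$, differentiate $\E{d(\alpha\|\cdot)}$ at $t=0$ and use that $P_Y$ is the marginal of its own kernel to show the derivative is nonnegative.
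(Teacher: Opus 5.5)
Your proof of \eqref{eq:RR+c} is the paper's: the kernel of Appendix~\ref{apx:RleqR+c} plus the golden formula, combined with \eqref{eq:Rdepslbc}. Your necessity argument, via equality in \eqref{eq:dpca}--\eqref{eq:dpcb}, is exactly the paper's one-line justification. The trouble is sufficiency, and you located it correctly. Equality in \eqref{eq:dpca}--\eqref{eq:dpcb} only yields $I(X;Y)=\E{d(\alpha_Y(X)\|P_Y(\set B_d(X)))}$ for the kernel's \emph{own} marginal, and the paper's proof passes over the step that this equals the infimum in \eqref{RR(+)c}. No perturbation argument can close this, because the ``if'' direction is false as stated. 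Here is a counterexample:
\begin{itemize}
\item Let $P_X$ be uniform on $\{1,2\}$ and $\set Y=\{a,b,c\}$, with $\set B_d(1)=\{a,c\}$ and $\set B_d(2)=\{b,c\}$ (a $\{0,1\}$-valued $\mathsf d$ with $d=0$). Take $0\le\epsilon<\frac12$.
\item $Y\equiv c$ is feasible, so $R_X^c(d,\epsilon)=0$.
\item The kernel $P_{Y|X=1}=(1-\epsilon)\delta_a+\epsilon\delta_b$, $P_{Y|X=2}=\epsilon\delta_a+(1-\epsilon)\delta_b$ is feasible, and its marginal $P_Y$ is uniform on $\{a,b\}$.
\item Hence $P_Y(\set B_d(x))=P_Y(\set B_d^c(x))=\frac12$ and $\alpha_Y(x)=1-\epsilon$. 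So \eqref{eq:optkernelc} holds $P_Y$-a.e., since $c$ is $P_Y$-null.
\item Yet $I(X;Y)=\log 2-h(\epsilon)>0$.
\end{itemize}
With $\epsilon=0$ the same example also refutes the sufficiency half of Theorem~\ref{thm:Rd}.

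Here is why your perturbation fails. Write $f_x(y)$ for the right side of \eqref{eq:optkernelc} and set $g(y)\triangleq\E{f_X(y)}$. The one-sided derivative at $t=0$ of the objective in \eqref{RR(+)c}, along $(1-t)P_Y+tQ_Y$, is $(1-\int g\,dQ_Y)\log e$. Self-consistency of $P_Y$ forces $g=1$ only $P_Y$-a.e. So the derivative vanishes when $Q_Y\ll P_Y$, but it can be negative otherwise. In the example $g(c)=2(1-\epsilon)>1$, so shifting mass toward $c$ strictly lowers the objective; this is the analogue of a Blahut--Arimoto fixed point with deficient support. The correct characterization, when $R_X^c(d,\epsilon)<\infty$, is that $P_{Y|X}$ is optimal iff two things hold: \eqref{eq:optkernelc}, \emph{and} its marginal $P_Y$ attains the infimum in \eqref{RR(+)c}. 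The latter is certified by the Kuhn--Tucker condition $g(y)\le 1$ for all $y\in\set Y$. Your perturbation argument does prove this corrected version. The map $q\mapsto d(\max\{1-\epsilon,q\}\|q)$ is convex, so the objective is convex in $Q_Y$, and a nonnegative directional derivative in every direction certifies a global minimum.
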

\begin{proof}
It is shown in Appendix~\ref{apx:RleqR+c} that 
 \begin{align}
 {R}_{X}^c (d, \epsilon)
\leq&~ {R}_{X}^{c\,+}(d, \epsilon)   \label{eq:RleqR+c},
\end{align}
Combining  \eqref{eq:RleqR+c} with \eqref{eq:Rdepslbc} establishes \eqref{eq:RR+c}. To show \eqref{eq:optkernelc}, we note that equality in both \eqref{eq:dpca} and \eqref{eq:dpcb} is achieved if and only if \eqref{eq:optkernelc} holds. 
\end{proof}

\section{Excess  distortion}
\label{sec:excess}
In this section, we weaken the constraint in \eqref{eq:Hepsdeltac} by allowing the distortion threshold be violated with some probability, $\epsilon$, when the probability is averaged over all source realizations $X$:

\begin{equation}
H_X(d, \epsilon) \triangleq \inf_{\substack{ P_{Y|X} \colon \set X \mapsto \set Y \colon \\ \Prob{\mathsf d(X, Y) > d } \leq \epsilon } } H(Y).
\label{eq:Hepsdelta} 
\end{equation}
The function \eqref{eq:Hepsdelta} is similar to $(\epsilon,\delta)$-entropy \cite{posner1967epsilonentropy}, except \eqref{eq:Hepsdelta} allows randomized encoding mappings.

Consider the function  \cite[(19))]{kostina2015varrate}
\begin{equation}
R_X(d, \epsilon) \triangleq  \inf_{ \substack{ P_{Y | X } \colon \set X \mapsto \set Y \\ \Prob{ \mathsf d( X, Y) > d  } \leq \epsilon}} I(X; Y) \label{RR(eps)},
\end{equation} 

\begin{thm}[excess distortion: quantizer entropy]
\begin{align}
 {R}_{X} (d, \epsilon) &\leq H_X(d, \epsilon)  \label{eq:Hdepslb} \\
 &\leq {R}_{X}(d, \epsilon) + \log_2 \left({R}_{X}(d, \epsilon) + 2 \right) 
\notag\\
&~+ 1 + \log_2 e \quad \text{bits}  \label{eq:Hdepsub}
\end{align}
\label{thm:Hdepse}
\end{thm}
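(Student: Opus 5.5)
The plan follows the pattern of \thmref{thm:Hd} and \thmref{thm:Hdeps}. The lower bound \eqref{eq:Hdepslb} is immediate: any kernel feasible in \eqref{eq:Hepsdelta} is feasible in \eqref{RR(eps)}, and $I(X;Y)\le H(Y)$ for discrete $Y$. For the upper bound I would reduce to the conditional problem of \secref{sec:conditional}, with the violation budget allocated nonuniformly across source realizations. Note first that the excess-distortion constraint $\Prob{\mathsf d(X,Y)>d}\le\epsilon$ is equivalent to requiring $\Prob{\mathsf d(X,Y)>d\,|\,X=x}\le\epsilon(x)$ for some measurable $\epsilon(\cdot)\in[0,1]$ with $\E{\epsilon(X)}\le\epsilon$.

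Step 1 is to introduce the auxiliary function
\begin{equation}
R_X^+(d,\epsilon)\triangleq\inf_{P_Y}\inf_{\epsilon(\cdot)\colon \E{\epsilon(X)}\le\epsilon}\E{d\left(\alpha_Y(X)\,\|\,P_Y(\set B_d(X))\right)},
\end{equation}
where now $\alpha_Y(x)\triangleq\max\{1-\epsilon(x),P_Y(\set B_d(x))\}$. Step 2 is to show $R_X^+(d,\epsilon)\le R_X(d,\epsilon)$. Fix a feasible $P_{Y|X}$ and set $\epsilon(x)\triangleq P_{Y|X=x}(\set B_d^c(x))$, so that $\E{\epsilon(X)}\le\epsilon$. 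The data processing argument \eqref{eq:dpca}--\eqref{eq:dpcb} then applies pointwise in $x$, with $P_Y$ the marginal. Averaging over $X$ and using $I(X;Y)=\E{D(P_{Y|X}(\cdot|X)\|P_Y)}$ gives the claim.

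Step 3 is to prove the analogue of \eqref{eq:Hdepsupperc}:
\begin{equation}
H_X(d,\epsilon)\le R_X^+(d,\epsilon)+\log_2(R_X^+(d,\epsilon)+2)+1+\log_2 e\quad\text{bits}.
\end{equation}
For this, take any $P_Y$ and any $\epsilon(\cdot)$ that are near-optimal in the definition. I would rerun the construction of Appendix~\ref{apx:H<=R+c}, which proves \eqref{eq:Hdepsupperc}, with the constant $\epsilon$ replaced by $\epsilon(x)$. That construction produces a randomized quantizer whose conditional violation probability at $x$ is at most $\epsilon(x)$. Its output entropy is bounded by $\E{d(\alpha_Y(X)\|P_Y(\set B_d(X)))}$ plus the same overhead terms. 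The overhead arises from coding an integer index, via $H(N)\le \log_2(\E{\log N}+\cdot)+\dots$ type bounds, and it depends only on the expectation of the per-$x$ cost, so it carries over. The resulting quantizer satisfies $\Prob{\mathsf d(X,Y)>d}=\E{\epsilon(X)}\le\epsilon$, so it is feasible for \eqref{eq:Hepsdelta}. Combining Step 2 and Step 3, and using that the right side is increasing in $R$, yields \eqref{eq:Hdepsub}.

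The main obstacle is Step 3. I need to verify that the appendix construction for \eqref{eq:Hdepsupperc} never uses that $\epsilon$ is constant in $x$. In particular, the random-codebook or greedy-list argument must use the fixed codebook distribution $P_Y$ for all $x$ while the acceptance probability at $x$ is tuned to $\epsilon(x)$. If the construction accepts the first codeword in $\set B_d(x)$ with a probability designed so that the per-$x$ violation is exactly $1-\alpha_Y(x)$, this should go through verbatim. Measurability of $\epsilon(\cdot)$ and infima that are not attained are handled by a standard $\delta$-approximation.
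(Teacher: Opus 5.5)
Your proposal is correct and takes essentially the same route as the paper: both rerun the conditional-excess-distortion argument with a source-dependent threshold $\epsilon(x)$ satisfying $\E{\epsilon(X)}\le\epsilon$. That argument has two parts: data processing to lower-bound $I(X;Y)$ by the auxiliary function, and the ``give up with probability $1-\alpha_Y(x)$'' random-codebook encoder of Appendix~\ref{apx:H<=R+c} for the entropy bound. The paper packages this as $R_X(d,\epsilon)=\inf_{\epsilon(\cdot)}R_X^c(d,\epsilon(\cdot))$, whereas you fold the infimum over $\epsilon(\cdot)$ into the auxiliary function and pass to the infimum via monotonicity of $r\mapsto r+\log_2(r+2)$, which has the small advantage of not presupposing that optimal thresholds are attained.
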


\begin{proof}
The lower bound is noted in \cite{kostina2015varrate} and follows from $I(X; Y) \leq H(Y)$. 

To show the upper bound, observe that all the steps in the proof of \thmref{thm:Hdeps} go through if the tolerated probability of violating distortion threshold is allowed to vary with source realization, i.e. if $\epsilon$ in \eqref{eq:Hepsdeltac}, \eqref{RR(eps)c} is replaced with $\epsilon(X)$. 

Next, notice that
\begin{align}
R_X(d, \epsilon) &=  \inf_{\substack{\epsilon \colon \set X \mapsto [0,1] \colon\\
\E{\epsilon(X)} \leq \epsilon}} R_X^c(d, \epsilon(\cdot)).
\label{eq:condvsexcess}
\end{align}
and analogously for $H_X(d, \epsilon)$. Since \eqref{eq:Hdepsubc} holds for an arbitrary set of thresholds $\epsilon(\cdot)$, it holds in particular for the optimal set of thresholds achieving the minimum on the right side of \eqref{eq:condvsexcess}; the validity of \eqref{eq:Hdepsub} follows.
\end{proof}

Define the function
\begin{equation}
  {R}_{X}^{+}(d,\epsilon) \triangleq \inf_{P_Y,\, \alpha_Y(\cdot)}
 \E{  d \left( \alpha_Y(X) \| P_Y\left(\set B_d(X)\right)\right) }.\label{RR(+)deps}
\end{equation}
where the infimum is over $P_Y$ and over functions $\alpha_Y \colon \mathcal X \mapsto [0,1]$ such that
\begin{align}
\alpha_Y(X) &\geq P_Y(\set B_d(X)) ~\text{a.s.} \label{eq:alpha1}\\
 \E{\alpha_Y(X)} &\geq 1 - \epsilon \label{eq:alpha2}
\end{align}

\begin{thm}[excess distortion: minimal mutual information]
 \begin{equation}
 R_X (d, \epsilon) = {R}_{X}^{+}(d,\epsilon). \label{eq:RR+e}
\end{equation}
Furthermore, the kernel $P_{Y|X}$ attains  ${R}_{X}(d, \epsilon)$ if and only if for $P_X$-a.e. $x$ \eqref{eq:optkernelc} holds, where
 $P_Y$ is the $Y$ marginal of $P_X P_{Y|X}$, and $\alpha_Y(\cdot)$ achieves the infimum on the right side of \eqref{RR(+)deps}.
\label{thm:Rde}
\end{thm}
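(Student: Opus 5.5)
We prove the two inequalities $R_X(d,\epsilon) \geq R_X^+(d,\epsilon)$ and $R_X(d,\epsilon)\leq R_X^+(d,\epsilon)$ separately, reusing the data-processing argument from \thmref{thm:Rdc}, and then read off the optimality condition from the equality cases.

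For the lower bound, fix any $P_{Y|X}$ with $\Prob{\mathsf d(X,Y)>d}\leq \epsilon$, let $P_Y$ be its output marginal, and set $\beta(x) \triangleq P_{Y|X=x}(\set B_d(x))$, so that $\E{\beta(X)} \geq 1-\epsilon$. Define $\alpha_Y(x) \triangleq \max\{\beta(x), P_Y(\set B_d(x))\}$. This $\alpha_Y$ satisfies \eqref{eq:alpha1} by construction and satisfies \eqref{eq:alpha2} because $\alpha_Y \geq \beta$. Apply the data processor used in \eqref{eq:dpca}: it outputs $P_{Y|X=x}$ on $\set B_d(x)$ and $P_Y$ otherwise. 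Data processing gives $D(P_{Y|X=x}\|P_Y)\geq d(\beta(x)\|P_Y(\set B_d(x)))$. By the monotonicity of $d(\cdot\|q)$ used in \eqref{eq:dpcb}, this is at least $d(\alpha_Y(x)\|P_Y(\set B_d(x)))$: either $\alpha_Y=\beta$, or $\beta< q = \alpha_Y$ and the right-hand side vanishes. Taking expectations yields $I(X;Y)\geq \E{d(\alpha_Y(X)\|P_Y(\set B_d(X)))}\geq R_X^+(d,\epsilon)$. Minimizing over feasible kernels gives $R_X(d,\epsilon)\geq R_X^+(d,\epsilon)$.

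For the upper bound, fix any feasible pair $(Q_Y,\alpha(\cdot))$ in \eqref{RR(+)deps}. Following the construction in Appendix~\ref{apx:RleqR+c}, we build the kernel
\begin{equation*}
\frac{dP_{Y|X=x}}{dQ_Y}(y)=\alpha(x)\frac{\1{\mathsf d(x,y)\leq d}}{Q_Y(\set B_d(x))}+(1-\alpha(x))\frac{\1{\mathsf d(x,y)>d}}{Q_Y(\set B_d^c(x))}.
\end{equation*}
This kernel has $P_{Y|X=x}(\set B_d(x))=\alpha(x)$, so $\Prob{\mathsf d(X,Y)>d}=1-\E{\alpha(X)}\leq\epsilon$. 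We need the convention that if $Q_Y(\set B_d(x))=0$ then $\alpha(x)$ must be handled separately, presumably by requiring finiteness of the objective so that such $x$ have measure zero or $\alpha(x)=0$. Using $I(X;Y)=\E{D(P_{Y|X}\|Q_Y)}-D(P_Y\|Q_Y)\leq \E{D(P_{Y|X}\|Q_Y)}$ and computing $D(P_{Y|X=x}\|Q_Y)=d(\alpha(x)\|Q_Y(\set B_d(x)))$ directly, we get $R_X(d,\epsilon)\leq \E{d(\alpha(X)\|Q_Y(\set B_d(X)))}$. Taking the infimum gives \eqref{eq:RR+e}. Note that constraint \eqref{eq:alpha1} is not even needed for this direction; it only restricts the infimum, harmlessly, since replacing $\alpha$ by $\max\{\alpha,Q_Y(\set B_d)\}$ does not increase the objective. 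Alternatively, the equality could be obtained from \thmref{thm:Rdc} with $\epsilon(\cdot)$ varying, via \eqref{eq:condvsexcess}, by identifying $\alpha_Y(x)=\max\{1-\epsilon(x),P_Y(\set B_d(x))\}$ and exchanging infima. The direct route seems cleaner.

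For the characterization of optimal kernels, a kernel attains $R_X(d,\epsilon)$ if and only if every inequality in the lower-bound chain is tight. Tightness in the data-processing step forces $dP_{Y|X=x}/dP_Y$ to be constant on $\set B_d(x)$ and on its complement, which gives the form \eqref{eq:optkernelc} with weight $\beta(x)$. Tightness in the monotonicity step forces $\beta=\alpha_Y$ a.s. Tightness in the final inequality forces this $\alpha_Y$ to achieve the infimum in \eqref{RR(+)deps} together with $P_Y$. Conversely, \eqref{eq:optkernelc} with an optimal pair makes all steps equalities. The main obstacle is this "only if" direction: optimality of $I(X;Y)$ must be tied to optimality of the pair $(P_Y,\alpha_Y)$ for the \emph{self-consistent} marginal, and the relative-entropy term must be checked to vanish, i.e.\ that the infimum over $Q_Y$ is attained at the output marginal. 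We would handle this by noting that $I(X;Y)=\min_{Q_Y}\E{D(P_{Y|X}\|Q_Y)}$, so equality throughout forces $Q_Y=P_Y$.
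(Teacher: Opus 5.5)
Your proof is correct and uses the same ingredients as the paper: the data-processing/monotonicity bound \eqref{eq:dpca}--\eqref{eq:dpcb} for the lower bound, and the two-piece kernel of Appendix~\ref{apx:RleqR+c} for the upper bound. The only difference is packaging: the paper routes these through \thmref{thm:Rdc} with variable thresholds $\epsilon(\cdot)$ and the decomposition \eqref{eq:condvsexcess}, whereas you apply them directly to the averaged constraint with $\alpha_Y(x)=\max\{P_{Y|X=x}(\set B_d(x)),P_Y(\set B_d(x))\}$, which is the paper's $\alpha_Y$ under $\epsilon(x)=1-P_{Y|X=x}(\set B_d(x))$. Your direct version has the mild advantage of making the equality-case analysis explicit, which the paper only asserts, and your worry about a residual $D(P_Y\|Q_Y)$ term is moot, since your lower-bound chain already uses the self-consistent marginal.
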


\begin{proof}
The proof steps of \thmref{thm:Rdc} remain valid if, in \eqref{RR(eps)c} and \eqref{eq:alpha_Y}, the fixed threshold $\epsilon$ is replaced with variable threshold $\epsilon(X)$. Since equality \eqref{eq:RR+c} holds for an arbitrary set of thresholds, it also holds for the optimal set of thresholds. The constraint in \eqref{eq:alpha1} utilizes monotonicity properties of the binary relative entropy function mentioned after \eqref{eq:dpcb}.
\end{proof}

The optimum $\alpha_Y(\cdot)$ in \eqref{RR(+)deps} (and hence the optimum $\epsilon(\cdot)$ in \eqref{eq:condvsexcess}) can be analyzed explicitly: it is given by assigning $\alpha_Y(x) = 1$ to all typical $x$ and assigning $\alpha_Y(x)$ to its minimal value for the atypical $x$, where $x$ is considered typical if  $P_Y(\set B_d(x)) \geq q$, where threshold $q$ is the supremum of $q$'s such that
\begin{align}
\E{\alpha_Y(X)} &= \Prob{B \geq q} + \E{B\, \1{B < q}} \label{eq:Ealphaopt}\\
&\geq 1 - \epsilon
\end{align}
where $B \triangleq P_Y(\set B_d(X))$. For an (often reasonable) approximation to the optimum $\alpha_Y(\cdot)$, one may ignore the second term in the right side of \eqref{eq:Ealphaopt}; then $\log_2 \frac 1 q$ is the level-$(1-\epsilon)$ quantile of the distribution of ``ideal codelength'' $\log_2 \frac 1 B$. Indeed this is the path taken in \cite{kostina2015varrate}, where said quantile is used to define the ``$\epsilon$-cutoff'' of $\log_2 \frac 1 B$, which is shown to play a crucial rule in nonasymptotic tradeoffs in variable-rate lossy compression of i.i.d. sequences.

\bibliographystyle{IEEEtran}
\bibliography{../../../../Bibliography/references,../../../../Bibliography/vkjrnl}

\appendices
\section{Entropy vs. minimum expected length}
\label{apx:entropyvsmdl}
Denote by $ \left\{0, 1\right\}^\star$ the set of all binary strings (including the empty string). A lossless variable-length binary encoder for the discrete random variable $X \in \set X$ is an injective mapping $\mathsf f \colon \mathcal X \mapsto \left\{0, 1\right\}^\star$.  

Let 
\begin{equation}
 L^\star(X) \triangleq \min_{\mathsf f} \E{\mathrm{len}(\mathsf f(X))}
\end{equation}
be the expected length of the string output by the encoder, minimized over all lossless encoders $\mathsf f$. It is related to the entropy of $X$ via \cite{alon1994lower, wyner1972upper}
\begin{align}\label{eq:alon2}
		H(X) - \log_2 (H(X)+1) - \log_2 e &\le L^\star(X) \\
		&\le H(X). \label{eq:wyner}
\end{align}	

A lossless variable-length binary prefix-free encoder for the discrete random variable $X \in \set X$ is an injective mapping $\mathsf f \colon \mathcal X \mapsto \left\{0, 1\right\}^\star$, where the image of $\mathsf f$ (called the set of codewords) satisfies the condition that no codeword is a prefix of another codeword. 

Let $L^\star_p(X)$ denote the minimum expected encoded length achievable among all prefix-free encoders. It is related to the entropy $X$ via \cite{shannon1948mathematical}
\begin{align}
		H(X) &\le L^\star(X) \\
		&\le H(X) + 1.
\end{align}	

Thus, in either case ---with or without prefix constraints--- entropy of the random variable captures the storage capacity necessary for its lossless reproduction (by the decoder that knows the encoding function), as long as the entropy of the random variable is not too small, which is the case in most practical scenarios.

\section{Proof of \eqref{eq:Hdupper}}
\label{apx:H<=R+}
Let codewords $Y_1, Y_2, \ldots$ be drawn i.i.d. from $P_Y$. The encoder outputs a binary encoding of the first $d$-close match to $x$, i.e. 
\begin{equation}
W \triangleq \min \left\{ m \colon \mathsf d(x, Y_m) \leq d \right\} 
\end{equation}
If $y_1, y_2, \ldots$ is a realization of $Y^\infty$, $\mathsf f(x) = y_w$ is a deterministic mapping that satisfies the constraint in \eqref{eq:Hd}, so, since $w \mapsto y_w$ is injective, we have
\begin{align}
H_{d}(X) \leq H(W | Y^\infty = y^\infty) 
\end{align}
 
We proceed to show that $H(W | Y^\infty) $ is upper bounded by the right side of \eqref{eq:Hdupper}. Via the random coding argument this will imply that there exists at least one codebook $y^\infty$ such that $H(W | Y^\infty = y^\infty) $ is also upper bounded by the right side of \eqref{eq:Hdupper}, and the proof will be complete.

Let
\begin{align}
L \triangleq \lfloor \log_2 W \rfloor 
\end{align}
and consider the chain
\begin{align}
 H(W | Y^\infty) &\leq H(W) \label{eq:Hduppera1}\\
 &= H(W | L)  + I(W; L) \\
 &\leq \E{L} + H(L) \label{eq:Hdupperb1}\\
 &\leq \E{L} + \log_2 \left( 1 + \E{L}\right) + \log_2 e \label{eq:Hdupperc1}
\end{align}
where
\begin{itemize}
\item  \eqref{eq:Hduppera1} holds because conditioning decreases entropy; 
\item \eqref{eq:Hdupperb1} holds because conditioned on $L = \ell$, $W$ can have at most $2^\ell$ values; 
\item  \eqref{eq:Hdupperc1} holds because the entropy of a positive integer-valued random variable with a given mean is maximized by the geometric distribution. 
\end{itemize}

Finally,
\begin{align}
\E{L} &\leq  \E{ - \log_2 P_{Y}(\set B_d(X)) }
\label{eq:ELub}
\end{align}
This is because 
\begin{align}
 \E{L | X = x} \leq&~ \log_2 \E{W | X = x}  \\
 =&~  - \log_2 P_{Y}(\set B_d(x)), \label{eq:ELubc}
\end{align}
which applies Jensen's inequality and the fact that conditioned on $X = x$ and averaged over codebooks, $W$ has geometric distribution with success probability $P_Y(\set B_d(x))$.

\section{Proof of~\eqref{eq:RleqR+}}
\label{apx:RleqR+}
 Fix a distribution $P_{\bar Y}$ on $\set Y$ and define the conditional probability distribution $P_{Y|X}$ through\footnote{Note that in general $P_X \to P_{Y|X} \nrightarrow P_{\bar Y}$.}
\begin{equation}
\frac{dP_{Y|X = x}}{dP_{\bar Y}}(y) = 
  \frac{\1{ \mathsf d(x, y) \leq d}}{P_{ \bar Y}(\set B_d(x))}   \label{PZ|Slossy}.
\end{equation}

Upper-bounding the minimum in \eqref{eq:objzero} with the choice of $P_{Y|X}$ in \eqref{PZ|Slossy}, we obtain 
\begin{align} 
{R}_{X}(d, 0) 
\leq&~ 
I(X; Y) 
\\
=&~ 
D\left( P_{Y | X} \| P_{ \bar Y} | P_{X} \right) - D(P_{Y} \| P_{ \bar Y})\\
\leq&~ D\left( P_{Y | X} \| P_{\bar Y } | P_{X} \right) \\
 =&~ \E{  - \log P_{\bar Y}(\set B_d(X)) } 
\end{align}
which leads to \eqref{eq:RleqR+} after minimizing the right side over all $P_{\bar Y}$.

\section{Proof of \eqref{eq:Hdepsupperc}}
\label{apx:H<=R+c}
Consider an encoder that, given an infinite list of codewords $y_1, y_2, \ldots$, outputs the first $d$-close match to $x$ with probability $\alpha_Y(x)$, and outputs $y_1$ otherwise. Specifically,  the encoder outputs a binary encoding of 
\begin{equation}
W \triangleq 
\begin{cases}
\min \left\{ m \colon \mathsf d(x, y_m) \leq d \right\} &   \text{w.p.~} \alpha_Y(x) \\
1 & \text{otherwise}
\end{cases}
\label{eq:Weps}
\end{equation}
This rather trivial randomized encoder, which simply gives up with a pre-determined probability and produces a $d$-close match otherwise, is indeed feasible in the context of the optimization problem \eqref{eq:Hepsdeltac}. 

The reasoning in Appendix~\ref{apx:H<=R+} applies to bound the entropy of $W$ in \eqref{eq:Weps}, with \eqref{eq:ELubc} replaced by
\begin{align}
 \E{L | X = x} &\leq d( \alpha_Y(x) \| P_Y(\set B_d(x))) + h(\alpha_Y(x))
 \label{eq:ELubcc}
\end{align}
and applying $h(\alpha_Y(x)) \leq 1$ bit, a crude bound resulting in the extra $+1$'s in \eqref{eq:Hdepsupperc} compared to \eqref{eq:Hdupper}, as in typical scenarios --- a small $\epsilon$, an even smaller $P_Y(\set B_d(x))$ --- $h(\alpha_Y(x))$ would be nearly 0. 

To show \eqref{eq:ELubcc}, denote by $U$ the Bernoulli random variable with success probability $\alpha_Y(X)$ employed in \eqref{eq:Weps}.  By the law of iterated expectation and Jensen's inequality, 
\begin{align}
\E{L | X = x} &= \E{\E{L | X = x, U}} \\
&\leq \E{ \log_2 \E{W | X = x, U}} \\
&=\alpha_Y(x) \log_2 \frac 1 {P_Y(\set B_d(x)))},
\label{eq:Ucond}
\end{align} 
where \eqref{eq:Ucond} uses that averaged over the codebook and conditioned on $X = x, U = 1$, the waiting time $W$ has geometric distribution with success probability $P_Y(\set B_d(x))$; while conditioned on $X = x, U = 0$, it is equal to $0$. 
The right side of \eqref{eq:ELubcc} is an upper bound to the right side of \eqref{eq:Ucond}.

\section{Proof of \eqref{eq:RleqR+c}}
\label{apx:RleqR+c}
 Fix a distribution $P_{\bar Y}$ on $\set Y$ and define the conditional probability distribution $P_{Y|X}$ through
\begin{align}
\frac{dP_{Y|X = x}}{dP_{\bar Y}}(y) =&~ 
\alpha_{\bar Y}(x)  \frac{ \1{ \mathsf d(x, y) \leq d}  }{P_{ \bar Y}(\set B_d(x))}  \label{PZ|Slossyc} \\
 &~+ (1 - \alpha_{\bar Y}(x)) \frac{\1{ \mathsf d(x, y) > d}}{P_{ \bar Y}(\set B_d^c(x))} \notag.
\end{align}

Upper-bounding the minimum in \eqref{RR(eps)c} with the choice of $P_{Y|X}$ in \eqref{PZ|Slossyc}, we obtain 
\begin{align}
{R}_{X}^c(d, \epsilon) 
\leq&~ 
I(X; Y) 
\\
\leq&~ D\left( P_{Y | X} \| P_{\bar Y } | P_{X} \right) \\
=&~ \E{  d \left( \alpha_Y(X) \| P_{\bar Y}\left(\set B_d(X)\right)\right) } 
\end{align}
which leads to \eqref{eq:RleqR+c} after minimizing the right side over all $P_{\bar Y}$.

\end{document}